\newcommand{\lr}[1]{\langle #1 \rangle}
\newtheoremstyle{named}{}{}{\itshape}{}{\bfseries}{.}{.5em}{\thmnote{#3 }}
\theoremstyle{named}
\newtheorem*{namedtheorem}{Theorem}
\title{Subquadratic Multivalued Asynchronous Byzantine Agreement WHP} %TODO Please add
\author{Shir Cohen}{Technion, Israel}{shirco@campus.technion.ac.il}{}{Supported by the Adams Fellowship Program of the
Israel Academy of Sciences and Humanities.}
\author{Idit Keidar}{Technion, Israel}{idish@ee.technion.ac.il}{}{}
\authorrunning{S. Cohen and I. Keidar} %TODO mandatory. First: Use abbreviated first/middle names. Second (only in severe cases): Use first author plus 'et al.'
\keywords{Byzantine agreement, subquadratic communication, fault tolerance in distributed systems.}
\begin{document}

\maketitle

%TODO mandatory: add short abstract of the document
\begin{abstract}
There have been several reductions from multivalued consensus to binary consensus over the past 20 years. To the best of our knowledge, none of them solved it for Byzantine asynchronous settings. In this paper, we close this gap. Moreover, we do so in subquadratic communication, using newly developed subquadratic binary Byzantine Agreement techniques.\end{abstract}

\section{Introduction}

%  BA important need to scale
%  algorand came up with a solution to solve in sync, using sortition and so on
% not a coin- async but binary
% we close the gap and show multivalued
% this is why this is not trivial and doesnt work for large domains

Byzantine Agreement (BA) is a well-studied problem where a set of correct processes have input values and aim to agree on a common decision despite the presence of malicious ones. 
This problem was first defined over 40 years ago~\cite{lamport2019byzantine}. However, in the past decade BA gained a renewed interest due to the emergence of blockchains as a new distributed and decentralized tool.
Moreover, the scale of the systems in which this problem is solved is much larger than in the past. As a result, there is a constant effort to find new techniques that will enable the reduction of communication complexity of BA solutions.

A significant improvement in BA scalability was enabled by subquadratic solutions, circumventing Dolev and Reischuk’s renown lower bound of $\Omega(n^2)$ messages~\cite{DolevBound}. This was done by King and Saia~\cite{king2011breaking} in the synchronous model and later by Algorand~\cite{Algorand} (first in the synchronous model and then with eventual synchrony). In their work, Algorand presented a validated committee sampling primitive, based on the idea of cryptographic sortition using \emph{verifiable random functions} (VRF)~\cite{micali1999VRF}. This primitive allows different subsets of processes to execute different parts of the BA protocol. Each committee is used for sending exactly one protocol message and messages are sent only by committee members, thus reducing the communication cost.

In this paper, we tackle the asynchronous model, which best describes real-life settings. Importantly, subquadratic asynchronous BA was first introduced not so long ago by Cohen et. al~\cite{cohen2020not} and Blum et. al~\cite{cryptoeprint:2020:851}. 
The limitation of these results is that both solve a binary BA, where the inputs and outputs are in $\{0,1\}$. 
We extend the binary results to multivalued BA, which is more suitable in real-world systems. Nowadays, perhaps the most extensive use of BA solution appears in blockchains to agree on the next block. These blocks carry multiple transactions (as well as additional metadata), which is clearly not a binary value. We note that a similar extension for multivalued consensus was presented in asynchrony by Mostefaoui, Raynal, and Tronel~\cite{mostefaoui2000binary} but it cannot handle Byzantine failures. Another reduction by Turpin and Coan~\cite{turpin1984extending} is able to handle Byzantine failures, but only in a synchronous model. Despite not solving the multivalued case in the Byzantine asynchronous model, both of them have quadratic word complexity (Following the standard complexity notions~\cite{abraham2019asymptotically,mostefaoui2015signature}).

We consider a system with a static set of $n$ processes and an adversary, in the so-called ``permissioned'' setting, where the ids of all processes are well-known. 
The adversary may adaptively corrupt up to  $f=(\frac{1}{3}-\epsilon)n$ processes in the course of a run, where $\frac{1}{2\ln n}<\epsilon < \frac{1}{3}$. In addition, we assume a trusted \emph{public key infrastructure} (PKI) that allows us to use \emph{verifiable random functions} (VRFs)~\cite{micali1999VRF}.
Finally, we assume that once the adversary takes over a process, it cannot “front run” messages that that process had already sent when it was correct, causing the correct messages to be supplanted. This assumption can be replaced if one assumes an erasure model as used in~\cite{cryptoeprint:2020:851,Algorand}. That is, using a separate key to encrypt each message, and deleting the secret key immediately thereafter.

Let us first examine the ``straightforward'' (yet faulty) reduction from multivalued BA to binary BA, both satisfying the \emph{strong unanimity} validity property. This property states that if all correct processes have the same input value, then this must be the decision they all output. To solve the multivalued BA, any process interprets its input as a binary string, and then all processes participate in a sequence of binary BA instances. The input for the $i^{th}$ instance by process $p$ is the $i^{th}$ digit in the binary representation. It is easy to see, that if all correct processes share the same input value, they start each instance with the same binary value and by validity agree upon it. Hence, by the end of the last BA instance, they all reach the same (input) decision. Otherwise, they can agree on some arbitrary common value.

Unfortunately, the simple binary-to-multivalued reduction does not work when applied with existing asynchronous subquadratic solutions. Assume that in the multivalued version of BA, values are taken from a finite domain $\mathcal{V}$. If the size of $\mathcal{V}$ is in $O(n)$, then to represent the input value as a binary string we need $O(\log n)$ bits, and the same number of BA instances. Although this number keeps the overall complexity subquadratic, it breaks the probability arguments made in the existing solutions.
% (by adding a logarithmic multiplicative factor), it breaks the probability arguments made in those papers.
Briefly, both works take advantage of logarithmic subsets of processes that drive the protocol progress. These so-called committees are elected uniformly such that with high probability (WHP) it contains ``enough'' correct processes, and not ``too many'' Byzantine ones. Since, WHP, both algorithms complete in a constant number of rounds, their safety and liveness are guaranteed WHP. However, once we apply these techniques and make the probability arguments more than a constant number of times (as we would if we were to apply the reduction), the high probability does not remain high at all.

To overcome this challenge, we take a different approach. We generalize the method in~\cite{turpin1984extending} to work with asynchronous committee sampling and solve for \emph{weak unanimity} validity. Our algorithm requires only two additional committees, compared to any binary BA algorithm.
% In addition, in order to keep the complexity low, we take advantage of a threshold signature scheme~\cite{boneh2001short,cachin2005random,shoup2000practical} that allows us to send multiple signatures in one word.
% Combining our different techniques
Finally, we present the first multivalued BA with a word complexity of $\widetilde{O}(n)$.

\section{Model and Preliminaries}
\label{sec:model}

We consider a distributed system consisting of a well-known static set $\Pi$ of $n$ processes and an adversary.
% and a \emph{delayed-adaptive adversary} (see definition below). 
The adversary may adaptively corrupt up to  $f=(\frac{1}{3}-\epsilon)n$ processes in the course of a run, where $\frac{1}{2\ln n}<\epsilon < \frac{1}{3}$.
A corrupted process is \emph{Byzantine}; it may deviate arbitrarily from the protocol. In particular, it may crash, fail to send or receive messages, and send arbitrary messages. As long as a process is not corrupted by the adversary, it is \emph{correct} and follows the protocol.
In addition, we assume that once the adversary takes over a process, it cannot “front run” messages that that process had already sent when it was correct, causing the correct messages to be supplanted. This assumption can be replaced if one assumes an erasure model as used in~\cite{cryptoeprint:2020:851,Algorand}. That is, using a separate key to encrypt each message, and deleting the secret key immediately thereafter.

% \paragraph}
% \paragraph*{Cryptographic tools}

% We assume a PKI and that the adversary is computationally bounded, meaning that it cannot obtain the private keys of correct processes unless it corrupts them. Furthermore, we assume that the PKI is in place from the outset. (Recall that we assume a permissioned setting, so the public keys of the $n$ processes are well-known). These assumptions allow us to use verifiable random functions, as we now define.

{\bf Cryptographic tools.}
We assume a trusted PKI, where private and public keys for the processes are generated before the protocol begins and processes cannot manipulate their public keys. We denote a value $v$ signed by process $p_i$ as $\lr{v}_i$. In addition, we assume that the adversary is computationally bounded, meaning that it cannot obtain the private keys of processes unless it corrupts them. Furthermore, we assume that the PKI is in place from the outset. 

These assumptions allow us to use a verifiable random function (VRF). This is a pseudorandom function that provides proof of its correct computation~\cite{micali1999VRF}. In our context, the use of VRF is implicit and it is used to implement a \emph{validated committee sampling} as defined in Section~\ref{sec:committees}.

% \paragraph*{Communication.}

{\bf Communication.}
We assume that every pair of processes is connected via a reliable link. Messages are authenticated in the sense that if a correct process $p_i$ receives a message $m$ indicating that $m $ was sent by a correct process $p_j$, then $m$ was indeed generated by $p_j$ and sent to $p_i$. The network is asynchronous, i.e., there is no bound on message delays.

% \paragraph*{Complexity.}
{\bf Complexity.}
We use the following standard complexity notions~\cite{abraham2019asymptotically,mostefaoui2015signature}.
While measuring complexity, we allow a \emph{word} to contain a signature or a value from a finite domain.
% We define the \textit{duration} of an execution as the longest sequence of messages that are causally related in this execution until all correct processes decide.
% We measure the expected \emph{running time} of our protocol as the maximum over all inputs and applicable adversaries
% of the expected duration, where expectation is taken over the random VRF outputs.
We measure the expected \emph{word complexity} of our protocols as
the maximum of the expected total number of words sent by correct processes where the maximum is computed over all inputs and applicable adversaries and expectation is taken over the random variables.

% \subsection{Validated Committee Sampling} 
% \label{sec:committees}

% A common strategy for achieving subquadratic word complexity is to avoid all-to-all communications~\cite{Algorand,king2011breaking}. This is done by sampling subsets of processes such that only those elected to a committee send messages. Due to committees being randomly sampled, the adversary is prevented from corrupting their members. On the other hand, committee members cannot predict the following committee sample and thus send their messages to all processes. Potentially, if the committee is sufficiently small, this technique allows committee-based protocols to result in subquadratic word complexity.

{\bf Validated Committee Sampling.}
Using VRFs, it is possible to implement \emph{validated committee sampling}, which is a primitive that allows processes to elect committees without communication and later prove
their election. It provides every process $p_i$ with a private
function $\emph{sample}_i(s,\lambda)$, which gets a string $s$ and a
threshold $1\leq \lambda \leq n$ and returns a tuple $\lr{v_i,\sigma_i}$, where
$v_i \in \{ \emph{true},\emph{false}\}$ and $\sigma_i$ is a proof that $v_i=\emph{sample}_i(s,\lambda)$. If $v_i=\emph{true}$ we say that $p_i$ is \emph{sampled} to the committee for $s$ and $\lambda$. The primitive ensures that $p_i$ is sampled with probability $\frac{\lambda}{n}$.
In addition, there is a public (known to all) function,
$\emph{committee-val}(s,\lambda,i,\sigma_i)$, which gets a
string $s$, a threshold $\lambda$, a process identification $i$ and a proof
$\sigma_i$, and returns \emph{true} or \emph{false}.

Consider a string $s$.
For every $i$, $1 \leq i \leq n$, let $\lr{v_i,\sigma_i}$ be the return value of $\emph{sample}_i(s,\lambda)$. The following is satisfied for every $p_i$:
\begin{itemize}
  
  \item $\emph{committee-val}(s,\lambda,i,\sigma_i) = v_i$.
  
  \item If $p_i$ is correct, then it is infeasible
  for the adversary to compute $\emph{sample}_i(s,\lambda)$.
  
  \item It is infeasible for the adversary to find  $\lr{v, \sigma}$
  s.t.\ $v \neq v_i$ and
  $\emph{committee-val}(s,\lambda,i,\sigma) = true$.
  
\end{itemize}

% In this paper when we use the term whp we mean that the probability of the discussed event is at least $1-\frac{1}{n^c}$ for every constant $c$ and sufficiently large $n$.

% In this work, we take advantage of the validated committee sampling, as defined and used in~\cite{cohen2020not}.
% In their work, Cohen et. al adjusted the use of this primitive to the asynchronous setting. With their technique, some asynchronous protocols can be modified to use committees. The challenge in asynchrony, with the lack of timeouts, is that no process can wait to hear from an entire committee nor can it wait for $n-f$ processes as usually done in such algorithms. Instead, some parameters are defined with respect to the system parameters such that safety and liveness can be provided with high probability.
Due to space limitations, we present here the parameters and guarantees as presented and proven in~\cite{cohen2020not} using Chernoff bounds. For simplicity, we only state the claims we are using in this paper.

\begin{namedtheorem}[Committee Sampling Properties from~\cite{cohen2020not}]
% All odd numbers are prime.

% \begin{theorem}

Let the set of processes sampled to the committee for $s$ and $\lambda$ be $C(s,\lambda)$, where $\lambda$ is set to $8\ln n$.
Let $d$ be a parameter of the system such that $\frac{1}{\lambda}<d < \frac{\epsilon}{3}-\frac{1}{3\lambda}$.
We set $W\triangleq \left \lceil{(\frac{2}{3}+3d)\lambda}\right \rceil$ and $B\triangleq \left \lfloor{(\frac{1}{3}-d)\lambda}\right \rfloor$.
With high probability the following hold:
\begin{description}
  \item[(S3)] At least
  $W$ processes in $C(s,\lambda)$ are correct.
    \item[(S4)] At most $B$ processes in $C(s,\lambda)$ are Byzantine.

    \item[(S5)] Consider $C(s,\lambda)$ for some string $s$ and two sets $P_1,P_2\subset C(s,\lambda)$ s.t $|P_1|=|P_2|=W$. Then, $|P_1\cap P_2|\geq B +1$.
\end{description}

% \end{theorem}
\end{namedtheorem}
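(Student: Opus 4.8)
The plan is to treat committee membership as the outcome of $n$ independent Bernoulli trials and apply concentration bounds, following~\cite{cohen2020not}. Fix a string $s$. By the guarantees of validated committee sampling, for each $i$ the indicator $X_i \triangleq \mathbf{1}[p_i \in C(s,\lambda)]$ is a Bernoulli random variable with $\Pr[X_i = 1] = \lambda/n$, and the $X_i$ are mutually independent. The one point that needs care is the \emph{adaptive} adversary: the partition of $\Pi$ into correct and Byzantine processes is not fixed in advance. The standard fix, which I would reuse, is the ``private sortition'' argument: because a correct process's VRF output is pseudorandom and infeasible for the adversary to compute until that process reveals it (and, by the no-front-running assumption of Section~\ref{sec:model}, its earlier honest messages survive corruption), the adversary's choice of which $\le f$ processes to corrupt is computationally independent of the committee membership of the still-correct processes. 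Hence it suffices to prove each bound for an arbitrary \emph{fixed} set of $f$ Byzantine processes and then account for the adversary's choices via a hybrid/simulation reduction, exactly as in~\cite{cohen2020not}.

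Granting that, (S4) and (S3) are direct Chernoff computations. For (S4), the number of Byzantine members of $C(s,\lambda)$ is a sum of at most $f$ of the $X_i$, so its mean is at most $f\lambda/n = (\tfrac13-\epsilon)\lambda$; the constraint $d < \tfrac{\epsilon}{3}-\tfrac1{3\lambda}$ is exactly what makes $B = \lfloor(\tfrac13-d)\lambda\rfloor$ exceed this mean (one checks $(\epsilon-d)\lambda > 1$ using $\epsilon > \tfrac1{2\ln n}$), so a multiplicative Chernoff upper-tail bound gives at most $B$ Byzantine members except with probability $n^{-\Omega(1)}$. Symmetrically, for (S3) the number of correct members of $C(s,\lambda)$ is a sum of at least $n-f$ of the $X_i$, with mean at least $(\tfrac23+\epsilon)\lambda$; the same constraint forces $W = \lceil(\tfrac23+3d)\lambda\rceil \le (\tfrac23+\epsilon)\lambda$, so $W$ does not exceed this mean, and a Chernoff lower-tail bound yields at least $W$ correct members w.h.p. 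In the same breath I would record the companion bound $|C(s,\lambda)| \le 2W-B-1$ w.h.p.: $|C(s,\lambda)| = \sum_i X_i$ has mean $\lambda$, and $2W-B-1 \ge (1+7d)\lambda - 1$ lies strictly above $\lambda$ (here the lower bound $d > \tfrac1\lambda$ supplies the gap), so this is again a Chernoff upper-tail statement.

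Property (S5) then follows deterministically from the committee-size bound by inclusion--exclusion: for $P_1,P_2 \subseteq C(s,\lambda)$ with $|P_1| = |P_2| = W$,
\[
  |P_1 \cap P_2| \;=\; |P_1| + |P_2| - |P_1 \cup P_2| \;\ge\; 2W - |C(s,\lambda)| \;\ge\; 2W - (2W-B-1) \;=\; B+1 .
\]
Finally, since a run invokes only polynomially many distinct strings $s$ (one per committee), a union bound over all of them---the adversary's choices having already been handled by the private-sortition reduction---keeps the total failure probability $n^{-\Omega(1)}$, which is the ``with high probability'' of the statement.

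The main obstacle, and the only genuinely delicate part, is calibrating the parameters so that all three tail bounds simultaneously beat $1/\mathrm{poly}(n)$: the thresholds $W$, $B$, and $2W-B-1$ must each sit a suitable distance (measured in units of $\lambda$) from their respective means, which is precisely why $\lambda$ is taken to be $\Theta(\ln n)$ with the explicit constant $8$ and why $d$ is pinned to the interval $(\tfrac1\lambda,\,\tfrac{\epsilon}{3}-\tfrac1{3\lambda})$ --- requiring this interval to be non-empty already forces $\epsilon > \tfrac1{2\ln n}$. The secondary, more conceptual, hurdle is making the adaptive-adversary reduction rigorous, since the correct/Byzantine partition is formally chosen only after protocol messages are observed; this is where the VRF pseudorandomness and the no-front-running (or erasure) assumption of Section~\ref{sec:model} are essential.
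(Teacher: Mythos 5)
The paper itself contains no proof of this theorem: it is imported verbatim from~\cite{cohen2020not}, with only the remark that the guarantees are ``presented and proven in~\cite{cohen2020not} using Chernoff bounds,'' so there is no in-paper argument to compare yours against step by step. Your reconstruction---independent Bernoulli membership indicators, Chernoff tails for (S3), (S4) and for a committee-size upper bound of the form $|C(s,\lambda)|\le 2W-B-1$, inclusion--exclusion to deduce (S5), and the VRF-privacy reduction to handle the adaptive adversary---is precisely the standard argument that the cited reference uses, so your approach matches the route the paper points to.
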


% \subsection{Process replaceabilty}

% Process replaceabilty is a novel property, first presented in Algornad~\cite{Algorand} as \emph{player replaceabilty}.
% This property that was defined for a consensus protocol can be extended for all protocols using committees. As processes are sampled to participate in a committee randomly and independently (using VRF in this paper), one cannot predict which processes will act at any communication step of the protocol. As a result, Byzantine processes cannot cooperate between different steps of a protocol. Formally, process replaceabilty means that a protocol works correctly if a randomly and independently selected set of processes execute each communication step of the protocol.

% Achieving this property in our committee-based protocols allow us to treat the delayed-adaptive adversary as if it was static corruption. The adaptive adversary that does not know which processes participate in the protocol at each communication step has as much power as one who commits on corruption pattern at the beginning of the protocol. In order to achieve this property, each committee member sends at most one message before it ends its job. At that point, if the adversary decides to corrupt it, it doesn't affect the rest of the protocol.

% \section{Problem Definitions}
\section{From Binary BA to Multivalued BA}

In the Byzantine Agreement (BA) problem, a set $\Pi$ of $n$ processes attempt to reach a common decision. In addition, the decided value must be ``valid'' in some sense which makes the problem non-trivial. 
% In this work we focus on the most known validity property, called \emph{strong unanimity}.
We consider two standard variants of BA for asynchrony that differ in their validity condition and the domain of inputs by processes in the system.
% In this work we reduce multivalued BA to binary BA, both are solve whp.
In the binary version, all inputs are taken from the domain $\{0,1\}$, while in the multivalued case they can be any value from any finite domain $\mathcal{V}$.
For the validity condition, in order to support a larger domain we weaken the validity condition. Instead of the known \emph{strong unanimity} property, we opt for \emph{weak unanimity} as defined below.
In this work we show how to reduce a subquadratic weak multivalued BA to a subquadratic binary strong BA, both are solved WHP.
That is, a probability that tends to 1 as $n$ goes to infinity.
Formally, we take a black-box solution to:

\begin{definition}[Binary Strong Byzantine Agreement WHP]
% \label{def:consensus_whp}
In Binary Strong Byzantine Agreement WHP, each correct process $p_i\in\Pi$ proposes a binary input value $v_i$ and decides on an output value $decision_i$ s.t. with high probability the following properties hold:

\begin{itemize}
    \item Validity (Strong Unanimity). If all correct processes propose the same value $v$, then any correct process
that decides, decides $v$.
    \item Agreement. No two correct processes decide differently.
    \item Termination. Every correct process eventually decides.
\end{itemize}

\end{definition}

And use it to solve:

%  For example, the problem as defined above does not
% make any assumption on processes’ input values, and is known as multi-value BA. If we were to
% require that vi
% is a binary value, the problem would be binary BA

\begin{definition}[Multivalued Weak Byzantine Agreement WHP]
% \label{def:consensus_whp}
In Multivalued Weak Byzantine Agreement WHP, each correct process $p_i\in\Pi$ proposes an input value $v_i$ and decides on an output value $decision_i$ s.t. with high probability the following properties hold:

\begin{itemize}
    \item Validity (Weak Unanimity). If all processes are correct and propose the same value $v$, then any correct process that decides, decides $v$.
    \item Agreement. Same as above.
    \item Termination. Same as above.
\end{itemize}

\end{definition}

\begin{algorithm}[t] \scalefont{0.9}
\caption{Multivalued Byzantine Agreement($v_i$): code for $p_i$}

% \begin{multicols}{2}

\begin{algorithmic}[1]

\Statex local variables: $\emph{alert}\in \{true,false\}$, initially $false$
\Statex $\emph{count}\in \mathbb{N}$, initially $0$
\Statex $\emph{init-set},\emph{init-values-set},\emph{converge-set}\in \mathcal{P}(\Pi)$, initially $0$
% \Statex
% 

\If{$\emph{sample}_i(\textsc{init},\lambda) = $ \emph{true}}
	 broadcast $\lr{\textsc{init},v_i}_i$ \label{l:send_init}
\EndIf
% \Statex

\Receiving{ $\lr{\textsc{init},v_j}_j$ with valid $v_j$ from validly sampled $p_j$ }

	\State \emph{init-set} $\gets$ \emph{init-set} $\cup  \{j\}$
	\State \emph{init-values-set} $\gets$ \emph{init-values-set} $\cup  \{v_j\}$   

    \If{$\emph{sample}_i(\textsc{converge},\lambda) = $ \emph{true} and $|\emph{init-set}|=W$ for the first time} \label{l:terminate_init}

    	\If{$\emph{init-values-set}=\{v_i\}$} \Comment{All received values are $p_i$'s initial value} \label{l:signleton}
        
        % \State batch the $W$ messages into $QC_{v_i}$ using a $(W,n)$-threshold signature scheme \label{l:batch_qc}

        \State batch the $W$ messages into $QC_{v_i}$ \label{l:batch_qc}

    	\State send $\lr{\textsc{converge},\emph{true}, QC_{v_i}}_i$  to all processes \label{l:send_qc}

    	\Else
    	
    	\State send $\lr{\textsc{converge},\emph{false},\bot}_i$  to all processes \label{l:not_content}

        \EndIf
    \EndIf
\EndReceiving
% \Statex

\Receiving{ $\lr{\textsc{converge},\emph{is\_content},QC_{v}}_j$  from validly sampled $p_j$ }

	\State \emph{converge-set} $\gets$ \emph{converge-set} $\cup  \{j\}$

    \If{\emph{is\_content}=\emph{true}} \label{l:if_content}
    \State $\emph{count} +=1$ \label{l:inc_counter}

    \EndIf

	\State \textbf{when} $|\emph{converge-set}|=W$ for the first time  \label{l:terminate_converge}

    \State \hspace{\algorithmicindent} $\emph{alert}\gets \emph{count}<B+1$ \label{l:check_count}
    \State \hspace{\algorithmicindent} $\emph{binary\_decision}_i\gets \emph{Binary Byzantine Agreement(alert)}$ \label{l:binary_dec}

    \State\hspace{\algorithmicindent} \textbf{if} $\emph{binary\_decision}_i=true$ \textbf{then}
    
    % \If{$\emph{binary\_decision}_i=true$}
    \State\hspace{\algorithmicindent} \hspace{\algorithmicindent}$\emph{decision}_i\gets \bot$ \label{l:dec_def}
    
    \hspace{\algorithmicindent}\textbf{else}
    % \Else
    
        % \State\hspace{\algorithmicindent}\hspace{\algorithmicindent} \textbf{if} $\emph{content}=true$ \textbf{then}
        
        % \State\hspace{\algorithmicindent}\hspace{\algorithmicindent}\hspace{\algorithmicindent}$\emph{decision}_i\gets v_i$

        % \State\hspace{\algorithmicindent}\hspace{\algorithmicindent} \textbf{else}

        % \State \hspace{\algorithmicindent} \hspace{\algorithmicindent} $\{\}$

        \State\hspace{\algorithmicindent}\hspace{\algorithmicindent} wait for a message of the form $\lr{\textsc{converge},\emph{true},QC_{v}}_j$ from validly sampled $p_j$ if such was not already received

        \State\hspace{\algorithmicindent}\hspace{\algorithmicindent}$\emph{decision}_i\gets v$ \label{l:dec_by_qc}

    % \EndIf

\EndReceiving

\end{algorithmic}

% \end{multicols}
\label{alg:to_multi}
\end{algorithm}

We employ committee sampling and a binary subquadratic strong BA to present a multivalued solution to the weak BA problem. 
% We utilize a binary subquadratic strong BA 
% for a multivalued solution to the weak BA problem. 
That is, the processes' initial values are from an arbitrary domain $\mathcal{V}$.
We follow the method presented in~\cite{turpin1984extending} and adjust it to work with an asynchronous environment and committee sampling to achieve a subquadratic solution.
The algorithm, presented in Algorithm~\ref{alg:to_multi}, consists of two communication phases followed by a binary BA execution. To reduce the communication costs of the algorithm, the two phases are being executed only by a subset of the processes that are elected uniformly in random by a committee sampling primitive. 

The first step is an \textsc{init} step, in which all \textsc{init} committee members send their signed initial value to all other processes (line~\ref{l:send_init}). The second is a \textsc{converge} step, during which all \textsc{converge} committee members aim to converge around one common value. To do so, \textsc{converge} processes are waiting to hear from sufficiently many processes in the \textsc{init} committee.
Since committees are elected using randomization, it is impossible for processes to wait for all of the previous committee members, as the size of the committee is unknown. Instead, it is guaranteed that a process hears from at least $W$ processes WHP.

For some process $p$ in the \textsc{converge} committee, if all $W$ \textsc{init} messages include the same value $v$, that is also $p$'s initial value, then $p$ is considered to be \emph{content}. To inform all other processes, $p$ sends a \textsc{converge} message claiming to be content, that also carries a quorum certificate (QC) on the value $v$ containing all received messages (line~\ref{l:send_qc}). In the complementary case, where $p$ knows of at least two different values by line~\ref{l:signleton}, it sends a \textsc{converge} message with a false is\_content flag (line~\ref{l:not_content}).

In the third part of the algorithm, processes run a binary consensus whose target is deciding whether the system as a whole is content. To do so, processes update an alert flag that is determined according to the number of content processes in the \textsc{converge} committee (lines ~\ref{l:if_content} -- \ref{l:check_count}). It is designed such that if a correct process $p$ hears from at least one non-content correct process, it sets its alert flag to true. To do so, we utilize the parameter $B$ which is, according to the specification, an upper bound on the number of Byzantine processes in a committee.

After processes set their boolean \emph{alert} flag, they make a binary decision on its values at line~\ref{l:binary_dec}. If the output is \emph{true}, then all correct processes output $\bot$ (line~\ref{l:dec_def}). Notice that by the strong unanimity property of the binary BA, it is impossible that all correct processes have had \emph{alert}=\emph{false} before the execution. Namely, there were many non-content processes in the converge committee. Otherwise, if the binary decision is \emph{true}, then it must be the case that (i) all content processes are content with respect to the same value and (ii) at least one correct process was content in the converge committee  (proof appears in the appendix). In this case, that process carries a quorum certificate with $W$ different signatures on a value $v$, that can be safely decided upon (line~\ref{l:dec_by_qc}) and is guaranteed to eventually arrive at all correct processes. This is mainly thanks to the fact that two subsets of the converge committee of size $W$ intersect by at least one correct process.  
In the appendix, we prove the following theorem:

\begin{theorem}
Algorithm~\ref{alg:to_multi} implements multivalued weak Byzantine Agreement WHP with a word complexity of $\widetilde{O}(n)$.
\end{theorem}

\section{Conclusions}
\label{sec:conclusions}

Real-world systems are asynchronous and prone to Byzantine failures. This paper presents an algorithm that reduces the multivalued weak BA WHP to binary strong BA. Together with the binary BA presented in~\cite{cohen2020not,cryptoeprint:2020:851} this paper yields that first subquadratic multivalued BA.

\bibliography{references}

\newpage

\appendix

\section{Validated Committee Sampling} 
\label{sec:committees}

A common strategy for achieving subquadratic word complexity is to avoid all-to-all communications~\cite{Algorand,king2011breaking}. This is done by sampling subsets of processes such that only those elected to a committee send messages. Due to committees being randomly sampled, the adversary is prevented from corrupting their members. On the other hand, committee members cannot predict the following committee sample and thus send their messages to all processes. Potentially, if the committee is sufficiently small, this technique allows committee-based protocols to result in subquadratic word complexity.

Using VRFs, it is possible to implement \emph{validated committee sampling}, which is a primitive that allows processes to elect committees without communication and later prove
their election. It provides every process $p_i$ with a private
function $\emph{sample}_i(s,\lambda)$, which gets a string $s$ and a
threshold $1\leq \lambda \leq n$ and returns a tuple $\lr{v_i,\sigma_i}$, where
$v_i \in \{ \emph{true},\emph{false}\}$ and $\sigma_i$ is a proof that $v_i=\emph{sample}_i(s,\lambda)$. If $v_i=\emph{true}$ we say that $p_i$ is \emph{sampled} to the committee for $s$ and $\lambda$. The primitive ensures that $p_i$ is sampled with probability $\frac{\lambda}{n}$.
In addition, there is a public (known to all) function,
$\emph{committee-val}(s,\lambda,i,\sigma_i)$, which gets a
string $s$, a threshold $\lambda$, a process identification $i$ and a proof
$\sigma_i$, and returns \emph{true} or \emph{false}.

Consider a string $s$.
For every $i$, $1 \leq i \leq n$, let $\lr{v_i,\sigma_i}$ be the return value of $\emph{sample}_i(s,\lambda)$. The following is satisfied for every $p_i$:

\begin{itemize}
  
  \item $\emph{committee-val}(s,\lambda,i,\sigma_i) = v_i$.
  
  \item If $p_i$ is correct, then it is infeasible
  for the adversary to compute $\emph{sample}_i(s,\lambda)$.
  
  \item It is infeasible for the adversary to find  $\lr{v, \sigma}$
  s.t.\ $v \neq v_i$ and
  $\emph{committee-val}(s,\lambda,i,\sigma) = true$.
  
\end{itemize}

% In this paper when we use the term whp we mean that the probability of the discussed event is at least $1-\frac{1}{n^c}$ for every constant $c$ and sufficiently large $n$.

In this work, we take advantage of the validated committee sampling, as defined and used in~\cite{cohen2020not}. In their work, Cohen et. al adjusted the use of this primitive to the asynchronous setting. With their technique, some asynchronous protocols can be modified to use committees. The challenge in asynchrony, with the lack of timeouts, is that no process can wait to hear from an entire committee nor can it wait for $n-f$ processes as usually done in such algorithms. Instead, some parameters are defined with respect to the system parameters such that safety and liveness can be provided with high probability. Due to space limitations, we present here the parameters and guarantees as presented and proven in~\cite{cohen2020not} using Chernoff bounds. For simplicity, we only state the claims we are using in this paper.

\begin{theorem}

Let the set of processes sampled to the committee for $s$ and $\lambda$ be $C(s,\lambda)$, where $\lambda$ is set to $8\ln n$.
Let $d$ be a parameter of the system such that $\frac{1}{\lambda}<d < \frac{\epsilon}{3}-\frac{1}{3\lambda}$.
We set $W\triangleq \left \lceil{(\frac{2}{3}+3d)\lambda}\right \rceil$ and $B\triangleq \left \lfloor{(\frac{1}{3}-d)\lambda}\right \rfloor$.
With high probability the following hold:
\begin{description}
  \item[(S3)] At least
  $W$ processes in $C(s,\lambda)$ are correct.
    \item[(S4)] At most $B$ processes in $C(s,\lambda)$ are Byzantine.

    \item[(S5)] Consider $C(s,\lambda)$ for some string $s$ and two sets $P_1,P_2\subset C(s,\lambda)$ s.t $|P_1|=|P_2|=W$. Then, $|P_1\cap P_2|\geq B +1$.
\end{description}

\end{theorem}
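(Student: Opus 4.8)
The plan is to treat committee membership as a collection of independent Bernoulli trials and then read off each of (S3)--(S5) as a tail event controlled by a Chernoff bound. By the guarantees of the validated committee sampling primitive, for a fixed string $s$ each process $p_i$ lands in $C(s,\lambda)$ independently with probability $\lambda/n$, and because the VRF output of a correct process is unpredictable, an adaptive adversary cannot bias these indicators before it decides whom to corrupt; one therefore fixes the (worst-case) corrupted set and argues over independent memberships. Writing $X$, $Y$, and $Z$ for the numbers of correct, Byzantine, and total members of $C(s,\lambda)$, each is a sum of independent indicators with $\mathbb{E}[X]\ge(\tfrac{2}{3}+\epsilon)\lambda$ (there are at least $n-f$ correct processes), $\mathbb{E}[Y]\le(\tfrac{1}{3}-\epsilon)\lambda$ (at most $f$ Byzantine ones), and $\mathbb{E}[Z]=\lambda$.

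For (S3) I would first check that the threshold $W$ lies strictly below $\mathbb{E}[X]$: since $d<\tfrac{\epsilon}{3}-\tfrac{1}{3\lambda}$ we get $(\epsilon-3d)\lambda>1$, hence $W\le(\tfrac{2}{3}+3d)\lambda+1<(\tfrac{2}{3}+\epsilon)\lambda\le\mathbb{E}[X]$, and a lower-tail Chernoff bound on $X$ then bounds $\Pr[X<W]$. Symmetrically, for (S4) the constraint $d<\tfrac{\epsilon}{3}$ places $B$ above $\mathbb{E}[Y]$ by a positive margin (the floor costs at most $1$ while $(\epsilon-d)\lambda$ is bounded below by an absolute constant), so an upper-tail Chernoff bound on $Y$ bounds $\Pr[Y>B]$. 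In both cases the mean is $\Theta(\lambda)=\Theta(\ln n)$ by the choice $\lambda=8\ln n$, which is what turns each bound into a failure probability that vanishes as $n\to\infty$.

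Property (S5) I would not attack probabilistically on $P_1,P_2$ directly; instead I would reduce it to a bound on the committee size. By inclusion--exclusion, for any $P_1,P_2\subseteq C(s,\lambda)$ with $|P_1|=|P_2|=W$ we have $|P_1\cap P_2|\ge 2W-|C(s,\lambda)|=2W-Z$, so (S5) holds deterministically on the event $Z\le 2W-B-1$. It then remains to show this event occurs with high probability: using $d>\tfrac{1}{\lambda}$ one checks $2W-B-1\ge(1+7d)\lambda-1>\lambda=\mathbb{E}[Z]$, so the threshold again sits above the mean and an upper-tail Chernoff bound on $Z$ finishes the argument. Finally I would combine the three failure events by a union bound (and, where the algorithm instantiates several committees, over the polynomially many relevant strings $s$), which preserves ``high probability'' because each term is polynomially small.

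The main obstacle is quantitative rather than structural: every threshold sits only a $\Theta(1/\lambda)$ multiplicative distance from its mean in the worst corner of the admissible parameter box (e.g.\ $d$ close to $\tfrac{\epsilon}{3}-\tfrac{1}{3\lambda}$ for (S3)), so I must verify that the Chernoff exponent $\mu\delta^2$ still grows with $n$. This is precisely where the side conditions $\lambda=8\ln n$ and $\tfrac{1}{2\ln n}<\epsilon<\tfrac{1}{3}$ are spent: they force the absolute gaps $\mathbb{E}[X]-W$, $B-\mathbb{E}[Y]$, and $2W-B-1-\mathbb{E}[Z]$ to be bounded below while keeping $\mu=\Theta(\ln n)$, so the exponents stay at least a positive constant times $\ln n$ and all three estimates are polynomially small. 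Getting these constants to line up---rather than any conceptual difficulty---is the delicate step, and it is exactly the calculation carried out in~\cite{cohen2020not}.
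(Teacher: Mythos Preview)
The paper does not actually prove this theorem: both in the body and in Appendix~A it explicitly imports the statement from~\cite{cohen2020not}, saying only that it is ``presented and proven in~\cite{cohen2020not} using Chernoff bounds'' and restricting itself to quoting the claims (S3)--(S5). There is therefore no in-paper proof to compare against; your sketch is exactly the Chernoff-bound argument the paper attributes to that reference, and you even close by deferring the delicate constant-matching to the same citation, so your proposal is fully aligned with the paper's treatment.

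One small caveat worth tightening if you flesh this out: the reduction ``fix the (worst-case) corrupted set and argue over independent memberships'' is not literally sound against an \emph{adaptive} adversary, since corruptions may depend on VRF outputs already revealed. The actual argument (as in~\cite{cohen2020not}) uses that a correct process's VRF output is hidden until it sends its committee message, so the adversary's adaptive choices are independent of the still-unrevealed indicators; the Chernoff analysis then goes through conditioned on any adaptive strategy. This is the standard cryptographic-sortition argument and does not change your structure, but the one-line justification you gave undersells it.
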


% \subsection{Correctness Proof}
\section{Correctness Proof of Algorithm~\ref{alg:to_multi}}
\label{subsec:proof}

\begin{lemma}{(Validity)}
If all processes are correct and they all propose the same value $v$, then they all decide $v$ WHP.
\end{lemma}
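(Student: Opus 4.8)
The plan is to trace the execution of Algorithm~\ref{alg:to_multi} under the hypothesis that every process is correct and every process proposes the same value $v$, showing that each branching point is resolved the ``right'' way, and that only a constant number of high-probability events are invoked so that a union bound preserves the ``with high probability'' qualifier.

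First I would handle the \textsc{init} phase. By property (S3) applied to the string \textsc{init}, with high probability at least $W$ members of the \textsc{init} committee are correct; since all processes are correct, each such member executes line~\ref{l:send_init} and broadcasts $\lr{\textsc{init},v}$. Over reliable links these messages eventually reach every process, so every member of the \textsc{converge} committee eventually reaches $|\emph{init-set}| = W$, and every element of its \emph{init-values-set} equals $v$. Since such a process also proposed $v$, the test $\emph{init-values-set}=\{v_i\}$ at line~\ref{l:signleton} passes, so it batches $W$ signed \textsc{init} messages into $QC_v$ and sends $\lr{\textsc{converge},\emph{true},QC_v}$ at line~\ref{l:send_qc}; no \textsc{converge} member ever sends a message with \emph{is\_content}$=$\emph{false}.

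Next I would analyze the \textsc{converge} phase and the binary decision. By (S3) applied to \textsc{converge}, with high probability at least $W$ correct processes are sampled to the \textsc{converge} committee; each of them sends a content \textsc{converge} message as above, so every correct process eventually reaches $|\emph{converge-set}| = W$ at line~\ref{l:terminate_converge} with $\emph{count} = W$, since every counted message was content. A short parameter check gives $W \ge B+1$: from the definitions $W \ge (\tfrac{2}{3}+3d)\lambda$ and $B+1 \le (\tfrac{1}{3}-d)\lambda + 1$, and since $d > 1/\lambda$ we get $(\tfrac{1}{3}+4d)\lambda > 4 > 1$. Hence $\emph{count} \ge B+1$, so every correct process sets $\emph{alert}\gets\emph{false}$ at line~\ref{l:check_count} and invokes the binary BA with input \emph{false}. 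By the strong unanimity (validity) property of the binary BA, which holds with high probability, every correct process obtains $\emph{binary\_decision}_i = \emph{false}$.

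Finally, with $\emph{binary\_decision}_i = \emph{false}$ every correct process takes the \textbf{else} branch: it waits for a message $\lr{\textsc{converge},\emph{true},QC_{v'}}$ and sets $\emph{decision}_i\gets v'$ at line~\ref{l:dec_by_qc}. But every \textsc{converge} message that any correct process ever receives carries $QC_v$ on the value $v$ — indeed each process already holds $W \ge 1$ such messages from the previous step — so $v' = v$ and $\emph{decision}_i = v$. Everything above is conditioned on finitely many high-probability events: (S3) for \textsc{init}, (S3) for \textsc{converge}, and the correctness guarantee of the binary BA. By a union bound the conjunction still holds with high probability, which proves the lemma. The only delicate point is this bookkeeping of the union bound, so that the number of invoked high-probability guarantees stays constant; this is exactly why the reduction is designed to add only two committees on top of the binary BA, and the rest is a direct walk through the code.
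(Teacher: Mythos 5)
Your proof is correct and follows essentially the same route as the paper's: (S3) for the \textsc{init} committee forces every \textsc{converge} member to be content and send $QC_v$, (S3) for the \textsc{converge} committee gives $\emph{count}=W\ge B+1$ so all inputs to the binary BA are \emph{false}, strong unanimity yields a \emph{false} decision, and the unique $QC_v$ determines the output $v$. Your added parameter check that $W\ge B+1$ and the explicit union-bound bookkeeping are fine elaborations of steps the paper leaves implicit.
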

\begin{proof}

If all processes are correct and propose the same value $v$, then all members of the \textsc{init} committee broadcast $v$. By S3, WHP, there are at least $W$ correct processes in the \textsc{init} committee. Hence, every process in the \textsc{converge} committee receives $W$ \textsc{init} messages with the value $v$ and by assumption $v$ is its initial value. Thus, WHP all correct processes in the \textsc{converge} committee execute lines~\ref{l:batch_qc} -- \ref{l:send_qc} and send a \textsc{converge} message with a quorum certificate on $v$.

By S3, WHP, there are at least $W$ correct processes in the \textsc{converge} committee.
Hence, all correct processes receive $W$ \textsc{converge} messages with a $\emph{is\_content}=true$ and a quorum certificate on $v$, for each of them they increase the counter at line~\ref{l:inc_counter}.
Following that, and since $W>B$, all correct processes set alert to \emph{false} at line~\ref{l:check_count} and start the binary BA with the same input \emph{false}.
By the strong unanimity validity condition of the binary BA, they all decide on \emph{false} at line~\ref{l:binary_dec}.
Since they all previously received a \textsc{converge} messages with a quorum certificate on $v$, they all decide on $v$ at line~\ref{l:dec_by_qc}.

% values that are not $v$, and send a <converge,false> message to all processes accordingly at line 9. In addition, none of them changes the content flag, and thus all correct processes have content set to true, as it is initialized.

% As a result, and since there are at most $B$ Byzantine in the converge committee, all correct processes then receive at most B <converge,true> messages. Thus, the counter can be updated at most $B$ times at line 12. As a result, all correct processes update alert to false at line 14 and invoke the binary BA at line 15 with the same initial value. By Theorem~\ref{main_theo}, they all decide alert false. Then, they all perform line 21, and as their content flag is set to true, they all decide their inital value $v$.
\end{proof}

\begin{lemma}
\label{lemma:evn_qc}
If the binary decision is \emph{false} at line~\ref{l:binary_dec}, then all correct processes receive a message $\lr{\textsc{converge},\emph{true},QC_{v}}_j$ from validly sampled $p_j$ WHP.
\end{lemma}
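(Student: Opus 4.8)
The plan is to reason backward from the strong unanimity of the black-box binary BA. Since the binary decision at line~\ref{l:binary_dec} is \emph{false}, strong unanimity (which holds WHP for the binary BA) implies that it is \emph{not} the case that every correct process entered the binary BA with input \emph{true}; hence at least one correct process $p$ executed $\emph{alert}\gets\emph{false}$ at line~\ref{l:check_count}. By the code, this can only happen if $p$ had $\emph{count}\geq B+1$ at the moment $|\emph{converge-set}|=W$. Since the counter is incremented only at line~\ref{l:inc_counter}, guarded by the test at line~\ref{l:if_content}, this means $p$ received at least $B+1$ messages of the form $\lr{\textsc{converge},\emph{true},QC_{v}}_j$ from validly sampled processes (i.e., genuine members of the \textsc{converge} committee).

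Next I would invoke property S4: WHP at most $B$ members of $C(\textsc{converge},\lambda)$ are Byzantine. Therefore, among the $\geq B+1$ validly-sampled senders whose \textsc{converge} messages with $\emph{is\_content}=\emph{true}$ were counted by $p$, at least one, call it $p_j$, is correct. A correct process sends a \textsc{converge} message with $\emph{is\_content}=\emph{true}$ only by executing lines~\ref{l:batch_qc}--\ref{l:send_qc}, after the singleton test at line~\ref{l:signleton} succeeds, which requires it to have batched $W$ valid \textsc{init} messages all carrying a common value $v$ into $QC_{v}$. Hence $p_j$ broadcast $\lr{\textsc{converge},\emph{true},QC_{v}}_j$ to all processes, and $p_j$ is validly sampled (so $\emph{committee-val}$ returns \emph{true} for it). Finally, $p_j$ was correct at the time it sent this message, so by the reliable-link assumption together with the no-front-running assumption (a message sent by a process while it was correct cannot be suppressed, even if the adversary later corrupts $p_j$), every correct process eventually receives $\lr{\textsc{converge},\emph{true},QC_{v}}_j$. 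Taking a union bound over the two WHP events used — correctness (strong unanimity) of the binary BA and S4 for the \textsc{converge} committee — the claim holds WHP.

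The main obstacle is not a deep one but a matter of care: first, correctly identifying which committee and which set of senders the bound $B$ is applied to (namely S4 for $C(\textsc{converge},\lambda)$, restricted to the senders $p$ actually heard from, so that $B+1$ counted \emph{content} messages forces one correct sender); and second, handling the adaptive-corruption subtlety at delivery, i.e., observing that although the adversary may corrupt $p_j$ immediately after it broadcasts its \textsc{converge} message, this does not prevent the already-sent message from reaching all correct processes — which is exactly what the no-front-running assumption (or the erasure model alternative) provides.
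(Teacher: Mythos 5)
Your proof is correct and follows essentially the same route as the paper's: use strong unanimity to deduce that some correct process entered the binary BA with $\emph{alert}=\emph{false}$, hence counted at least $B+1$ content \textsc{converge} messages, apply S4 to extract a correct content sender $p_j$, and conclude from $p_j$'s broadcast at line~\ref{l:send_qc} that all correct processes eventually receive its message. Your added remarks on the no-front-running assumption and the union bound over WHP events are finer-grained than the paper's write-up but do not change the argument.
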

\begin{proof}
Notice that there exists at least one correct process $p_i$ that initiates the binary agreement with $alert=false$, as otherwise by the strong unanimity validity condition of the binary BA the binary decision must have been true. It follows from the code (line~\ref{l:inc_counter}, \ref{l:check_count}) that $p_i$ received at least $B+1$ \textsc{converge} messages with \emph{is\_content=true}. By property S4, WHP, there are at most $B$ Byzantine processes in the \textsc{converge} committee. Hence, $p_i$ received the \textsc{converge} message with a QC from at least one correct process $p_j$. By the code, $p_j$ must have sent that message to all correct processes at line~\ref{l:send_qc}. Therefore, every correct process eventually receives that message.

% and can decide safely at line~\ref{l:dec_by_qc}.
\end{proof}

\begin{lemma}{(Termination)}
Every correct process decides WHP.

\end{lemma}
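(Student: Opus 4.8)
The plan is to trace the execution of an arbitrary correct process $p$ step by step, argue that it reaches a decision, and collect along the way the finitely many high-probability events on which this depends.

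First I would show that, WHP, every correct member of the \textsc{converge} committee eventually sends \emph{some} \textsc{converge} message. By (S3) applied to the string \textsc{init}, WHP at least $W$ correct processes are sampled to the \textsc{init} committee, and each of them broadcasts $\lr{\textsc{init},v_i}_i$ on line~\ref{l:send_init}. Hence any correct process sampled to the \textsc{converge} committee eventually has $|\emph{init-set}| = W$, triggers the condition on line~\ref{l:terminate_init}, and sends a \textsc{converge} message --- either $\lr{\textsc{converge},\emph{true},QC_{v_i}}_i$ on line~\ref{l:send_qc} or $\lr{\textsc{converge},\emph{false},\bot}_i$ on line~\ref{l:not_content}; in either branch a message is sent, regardless of the contents of \emph{init-values-set}.

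Next, applying (S3) to the string \textsc{converge}, WHP at least $W$ correct processes are sampled to the \textsc{converge} committee, and by the previous paragraph they all eventually send a \textsc{converge} message. Thus $p$ eventually has $|\emph{converge-set}| = W$, reaches line~\ref{l:terminate_converge}, sets its \emph{alert} flag on line~\ref{l:check_count}, and invokes the binary BA on line~\ref{l:binary_dec}. By the Termination property of the binary BA (which itself holds WHP), $p$ eventually obtains $\emph{binary\_decision}_i$. If $\emph{binary\_decision}_i = \emph{true}$, then $p$ decides $\bot$ on line~\ref{l:dec_def} and we are done. If $\emph{binary\_decision}_i = \emph{false}$, then by Lemma~\ref{lemma:evn_qc}, WHP $p$ eventually receives a message $\lr{\textsc{converge},\emph{true},QC_{v}}_j$ from a validly sampled $p_j$ (if it has not already), and so decides $v$ on line~\ref{l:dec_by_qc}.

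Finally I would take a union bound: the argument invokes (S3) twice, the Termination of the binary BA once, and Lemma~\ref{lemma:evn_qc} once --- a constant number of events, each holding WHP --- so their conjunction also holds WHP, which is exactly what the committee-sampling approach of the paper is designed to permit. The step I expect to require the most care is the first one, namely making sure that every correct \textsc{converge}-committee member indeed sends \emph{a} \textsc{converge} message (not only the content ones), so that the later threshold of $W$ \textsc{converge} messages is in fact reached; once that is pinned down the remainder is routine, and the only genuinely external ingredient is Lemma~\ref{lemma:evn_qc}.
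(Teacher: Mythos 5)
Your proposal is correct and follows essentially the same route as the paper's own proof: apply (S3) to the \textsc{init} committee so every correct \textsc{converge} member sends a message, apply (S3) to the \textsc{converge} committee so the $W$ threshold is reached, invoke binary BA Termination, and close the \emph{false}-decision branch with Lemma~\ref{lemma:evn_qc}. The only differences are presentational --- you make explicit that both branches of line~\ref{l:terminate_init} emit a message and that a union bound over a constant number of WHP events is taken, both of which the paper leaves implicit.
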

\begin{proof}
By S3, WHP, $W$ correct processes are elected to the \textsc{init} committee, broadcasting an \textsc{init} message. Thus, every correct process who is sampled to the \textsc{converge} committee eventually receives W \textsc{init} messages WHP and enters the if condition at line~\ref{l:terminate_init}. By S3 again, WHP, at least $W$ correct processes are elected to the \textsc{converge} committee. Thus, every correct process receives WHP $W$ \textsc{converge} messages, and execute line~\ref{l:terminate_converge}. By the termination of the binary BA, all correct processes terminate the binary BA WHP.
If the binary decision is \emph{true}, then they all decide at line~\ref{l:dec_def}. Otherwise, by Lemma~\ref{lemma:evn_qc} they all decide at line~\ref{l:dec_by_qc} after receiving a valid $\lr{\textsc{converge},\emph{true},QC_{v}}_j$ message WHP.
\end{proof}

\begin{lemma}{(Agreement)}
No two correct processes decide different values WHP.
\end{lemma}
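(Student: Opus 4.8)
The plan is to case-split on the value of the binary decision, which by the Agreement property of the binary BA is the same (WHP) for all correct processes. If the binary decision is \emph{true}, every correct process that decides does so at line~\ref{l:dec_def} with value $\bot$, so Agreement is immediate. The remaining case, where the binary decision is \emph{false}, is the substantive one: here every correct process decides at line~\ref{l:dec_by_qc} the value carried by some certificate $\lr{\textsc{converge},\emph{true},QC_{v}}_j$ that it receives, and I would reduce Agreement to showing that all such certificates carry one and the same value.

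The core of the argument is a quorum-intersection claim: any two \emph{valid} quorum certificates $QC_{v}$ and $QC_{v'}$ --- that is, batches of $W$ correctly signed $\lr{\textsc{init},\cdot}$ messages from validly sampled processes --- satisfy $v=v'$. First I would note that the senders of the $W$ messages in $QC_{v}$ form a set $P_1\subseteq C(\textsc{init},\lambda)$ with $|P_1|=W$, and the senders of $QC_{v'}$ form a set $P_2\subseteq C(\textsc{init},\lambda)$ with $|P_2|=W$. By (S5), $|P_1\cap P_2|\ge B+1$, and by (S4) at most $B$ members of $C(\textsc{init},\lambda)$ are Byzantine, so $P_1\cap P_2$ contains at least one correct process $p_k$. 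Since $p_k$ is correct it signs an $\textsc{init}$ message only once, for its own initial value, and the adversary cannot forge a differently signed message on its behalf; hence the message of $p_k$ appearing in $QC_{v}$ and the one appearing in $QC_{v'}$ are literally the same message, which forces $v=v'$.

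Given the claim, the conclusion follows quickly: by Lemma~\ref{lemma:evn_qc} every correct process eventually receives some valid $\lr{\textsc{converge},\emph{true},QC_{v}}_j$ and sets its decision to the value $v$ inside it; since a correct process only adopts a value from a certificate it has validated, every value decided at line~\ref{l:dec_by_qc} is the value of a valid certificate, and by the claim all these values coincide. Together with the \emph{true} case this gives Agreement. For the probabilistic bookkeeping I would observe that the proof invokes (S4) and (S5) for the \textsc{init} committee, the committee guarantees used in the earlier lemmas, and the Agreement/Termination guarantees of the binary BA --- a constant number of high-probability events --- so a union bound preserves ``WHP''.

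The step I expect to be the main obstacle is pinning down the validity notion for a quorum certificate precisely enough that the intersection argument is airtight: specifically, arguing that a Byzantine \textsc{converge} member cannot assemble a ``valid-looking'' $QC_{v}$ whose constituent $\textsc{init}$ messages evade the $C(\textsc{init},\lambda)$-membership check, the no-equivocation property of correct signers, or the unforgeability assumption --- all of which (S4), (S5), and the PKI must jointly rule out. Everything else is routine case analysis and a union bound.
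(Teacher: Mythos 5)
Your proposal is correct and follows essentially the same route as the paper's proof: a case split on the binary decision, with the \emph{false} case resolved by Lemma~\ref{lemma:evn_qc} plus a quorum-intersection argument showing any two valid certificates agree. Your version is in fact slightly more careful than the paper's, which attributes the ``intersection contains a correct process'' step to (S5) alone, whereas you correctly note that (S4) is also needed to rule out the intersection consisting entirely of Byzantine processes.
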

\begin{proof}
According to the binary BA agreement, all correct processes decide the same value WHP at line~\ref{l:binary_dec}.
We consider the two possible binary BA outputs.
If the binary decision is \emph{true}, then all correct processes decide the domain's default value, and hence the same value. 
Otherwise, the binary decision is \emph{false} and the decision is made according to the value signed in a QC received in a converge message. By Lemma~\ref{lemma:evn_qc}, WHP all correct processes receive a converge message with \emph{is\_content} flag set to \emph{true}. It is left to show that all of these messages carry a quorum certificate for the same value $v$.

A valid QC on value $v$ is a collection of $W$ signatures by different processes that sent an \textsc{init} message on the value $v$. By property S5, every two subsets of the \textsc{init} committee of size $W$ intersect by at least one correct process. Since a correct process only signs one \textsc{init} message it follows that only one valid QC can be sent at line~\ref{l:send_qc}.

\end{proof}

% \subsection{Complexity}
\textbf{Complexity.}
In Algorithm~\ref{alg:to_multi} all correct processes that are sampled to the two committees (lines~\ref{l:send_init},\ref{l:terminate_init}) send messages to all other processes. Each of these messages contains a value from the finite domain, a VRF proof of the sender’s election to the committee, and possibly a quorum certificate of $W$ different signatures.
Therefore, each message’s size is either a constant number of words or $W$ words. Thus, the total word complexity of a multivalued weak BA WHP is $O(nWC)$ where $C$ is the number of processes that are sampled to the committees. Since each process is sampled to a committee with probability $\frac{\lambda}{n}$, we get a word complexity of $O(nW\lambda) = O(n\log ^2 n) = \widetilde{O}(n)$ in expectation.

We conclude:

\begin{theorem}
Algorithm~\ref{alg:to_multi} implements multivalued weak Byzantine Agreement WHP with a word complexity of $\widetilde{O}(n)$.
\end{theorem}

\end{document}